\documentclass[11pt]{article}
\usepackage[utf8]{inputenc}
\usepackage{fullpage}
\usepackage[hidelinks]{hyperref}
\usepackage{lmodern,amsmath,amssymb,amsthm,microtype,tikz,float}
\usepackage{dsfont}
\usepackage{thmtools}
\usepackage{algorithm}
\usepackage{algcompatible}

\usepackage{nameref,cleveref}
\usepackage{color}
\usepackage[shortlabels]{enumitem}

\usepackage{csquotes}

\usepackage{svg}

\newtheorem{thm}{Theorem}

\newtheorem{lemma}[thm]{Lemma}
\newtheorem{theorem}[thm]{Theorem}
\newtheorem{corollary}[thm]{Corollary}
\newtheorem{definition}[thm]{Definition}

\crefname{definition}{definition}{definitions}
\Crefname{definition}{Definition}{Definitions}

\newcommand{\eps}{\varepsilon}

\newcommand{\R}{\mathbb{R}}
\newcommand{\E}{\mathbb{E}}

\DeclareMathOperator{\Bin}{Binomial}

\DeclareMathOperator{\mean}{mean}
\DeclareMathOperator{\midr}{mid}
\DeclareMathOperator{\Uniform}{Uniform}
\DeclareMathOperator{\MeanSE}{MeanSE}
\DeclareMathOperator{\MaxSE}{MaxSE}

\definecolor{probe}{RGB}{38,98,137}
\definecolor{stepcolor}{RGB}{161,73,28}

\allowdisplaybreaks

\title{Tight Lower Bounds for Differentially Private Continual Counting}

\author{Charlie Harrison \\Google\\ \texttt{\small csharrison@google.com} \and Ethan Leeman \\Google Research\\ \texttt{\small ethanleeman@google.com}}

\begin{document}

\maketitle

\begin{abstract}
The Binary Tree Mechanism is a standard algorithm for differentially private continual counting, but its asymptotic optimality under pure differential privacy has remained unresolved since its introduction. We resolve this question. For fixed $0 < \eps \le 1$, we prove asymptotically tight lower bounds of $\Omega(\log^2 n)$ for worst-case expected $\ell_\infty$ error and $\Omega(\log^3 n)$ for mean and maximum per-coordinate expected squared error. These bounds hold for arbitrary mechanisms, even when the entire stream is available in advance. The same lower bounds hold under approximate differential privacy whenever $\delta\le n^{-c}$, for any fixed $c>0$. Our lower bounds match the Binary Tree Mechanism instantiated with Laplace noise, establishing its asymptotic optimality under both pure differential privacy and approximate differential privacy in the standard regime of $\delta \ll1/n$. Our proof uses a single hard distribution with a bounded exponential score on a tree. A simple modification of the score allows the same framework to establish tight lower bounds for all three error measures.
\end{abstract}

\section{Introduction}
Differential privacy (DP) \cite{DworkMNS06} provides rigorous privacy guarantees for randomized data analysis. We study the problem of private \emph{continual counting} \cite{dworkcontinual, chan-continual-release}. Given a private bitstream $x \in \{0, 1\}^n$, the mechanism must continually release estimates of all prefix sums $(A_n x)_t = \sum_{i=1}^t x_i$, where $A_n \in \{0, 1\}^{n \times n}$ is the lower-triangular all-ones matrix. We consider this problem under both $\ell_\infty$ and squared error notions. We begin with the former:
\begin{align*}
\alpha_\infty(M) &= \max_{x \in \{0, 1\}^n} \E_M[\| M(x) - A_n x \|_\infty].
\end{align*}
The private continual counting problem has a rich history. It was introduced concurrently by Dwork et al.~\cite{dworkcontinual} and Chan, Shi, and Song~\cite{chan-continual-release}, who independently proposed the celebrated Binary Tree Mechanism (BTM). When instantiated with Laplace noise it satisfies\footnote{In this section we suppress dependence on $\eps$ for sake of presentation.} $O(\log^2 n)$ error under pure $\eps$-DP, while its Gaussian analogue can achieve $O(\log^{3/2} (n) \sqrt{\log(1/\delta)})$ error under approximate $(\eps, \delta)$-DP.

Dwork et al. also established an initial $\Omega(\log n)$ lower bound for $\alpha_\infty$ under pure DP. This lower bound was state of the art until the recent breakthrough of Bairaktari and Larsen \cite{bairaktari2026binary}, who proved an unrestricted $\Omega(\log^{3/2} n)$ lower bound under both privacy regimes. When $\delta = \Theta(1)$ is sufficiently small, their work settles the complexity of continual counting for $\alpha_\infty$. However, for pure DP and for standard regimes of approximate DP where $\delta \le n^{-\Omega(1)}$, a $\sqrt{\log n}$ gap has remained open.

Beyond $\ell_\infty$ error, continual counting is also studied under squared error notions
\begin{align*}
\mathrm{MeanSE}(M) &= \max_{x \in \{0, 1\}^n} \E_M\left[\frac{1}{n}  \sum_{t=1}^n  (M(x)_t - (A_n x)_t)^2\right],\\
\mathrm{MaxSE}(M) &= \max_{x \in \{0, 1\}^n} \max_{t \in [n]} \E_M\left[(M(x)_t - (A_n x)_t)^2\right].
\end{align*}

MeanSE is of particular importance in machine learning, where continual counting is used as a foundational privacy primitive (e.g.,  \cite{kairouz2021practical}). Note that $\mathrm{MaxSE} \ge \mathrm{MeanSE}$, so any lower bounds on MeanSE automatically transfer. BTM satisfies $O(\log^3 n)$ error for either metric when instantiated with Laplace noise.

Under approximate DP, tight bounds of $\Theta(\log^2 n)$ were given by Henzinger, Upadhyay, and  Upadhyay \cite{henzinger2023almost} under MeanSE for arbitrary mechanisms when $\delta$ is fixed. Their upper bound uses a Gaussian instantiation of the matrix-mechanism framework~\cite{matrixmechanism}, and follow-up work improved additive constants in their factorization bounds, without changing their asymptotic order \cite{henzinger2025improved, henzinger2026forc}.

The $\Omega(\log^2 n)$ lower bound also applies to pure DP, leaving a $\log n$ gap relative to the $O(\log^3 n)$ upper bound. Recent work has investigated this gap within the restricted class of Laplace-based matrix mechanisms whose bounds can be reduced to studying properties of matrix factorization costs. Tight results are known  for binary factorizations \cite{arkhipov2026improved} and, up to $\mathrm{polyloglog}$ factors, for MaxSE under arbitrary real factorizations \cite{lin2026near, bulanek2026matrix}. Bhowmik and Hasan \cite{bhowmik2026costs} obtained tight $\Theta(\log^3 n)$ bounds for both MeanSE and MaxSE in this class, resolving the matrix mechanism restricted class but leaving the unrestricted-mechanism question open where a $\log n$ gap persists.

Lastly, \cite{fichtenberger2022constant} provide a lower bound for the worst-case expected \emph{squared} $\ell_\infty$ error $B_\infty(M) = \max_{x \in \{0, 1\}^n} \E_M[\| M(x) - A_n x \|_\infty^2]$ for data-independent mechanisms. Jensen’s inequality yields $B_\infty(M) \ge \alpha_\infty^2(M)$ so our lower bounds on $\alpha_\infty$ immediately yield lower bounds on $B_\infty$ over all mechanisms. We omit a separate discussion of this metric for sake of brevity.

\subsection{Our contributions}
Under both pure DP and approximate DP with $\delta = n^{-\Omega(1)}$, 
we close the $\sqrt{\log n}$ gap for continual counting under $\alpha_\infty$, as well as the $\log n$ gap for MeanSE (and hence MaxSE). Both lower bounds hold for arbitrary mechanisms.

\begin{theorem}
\label{thm:lower_bound}
Let $0 < \eps \le 1$. Then there exist absolute constants $c, C_0, C_1 > 0$ such that if $n \varepsilon \geq C_0$ and $\varepsilon/\delta \geq C_1$ and $M$ an $(\eps, \delta)$-DP mechanism for binary prefix sums, then

\begin{align*}
\alpha_\infty(M) &\geq c \left(\frac{\log(n\eps)}{\eps} \min\!\left\{\log(n\eps),\, \log\frac{\eps}{\delta}\right\}\right).\\
\mathrm{MaxSE}(M) \ge \mathrm{MeanSE}(M) &\geq c \left(\frac{\log(n\eps)}{\eps^2} \min\!^2\left\{\log(n\eps),\, \log\frac{\eps}{\delta}\right\}\right).
\end{align*}
With $\log(\eps/0) = +\infty$, this gives the pure-DP bound $\Omega(\log^2(n\eps)/\eps)$ and $\Omega(\log^3(n \eps) /  \eps^2)$, respectively.
\end{theorem}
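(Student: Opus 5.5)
We will prove the theorem with a single hard distribution over input streams, built on a complete binary tree over $[n]$. A weighted averaging argument then lower bounds the error of any mechanism through an exponential score. Throughout we write $\ell=\min\{\log(n\eps),\log(\eps/\delta)\}$ and let $L\approx\log(n\eps)$ be the number of levels at which we plant signal.

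\textbf{The hard distribution.} Partition $[n]$ into blocks of length roughly $1/\eps$, so that one unit of privacy budget corresponds to changing one block. Build a dyadic tree of depth $L$ over the blocks. Each internal node $v$ at level $j$ gets an independent sign $s_v\in\{\pm1\}$. The stream is a baseline half-density stream, and at each node $v$ we shift about $k\approx\ell/\eps$ ones from the left half of $v$'s interval to the right half, or the reverse, according to $s_v$.

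This construction has two key features.
\begin{itemize}
\item Flipping a single $s_v$ changes the input in $O(\ell/\eps)$ positions. By group privacy this costs a likelihood factor of $e^{O(\ell)}$. When $\delta>0$, group privacy for $(\eps,\delta)$ gives additive slack $\approx k e^{k\eps}\delta$, which is negligible precisely because $\ell\le\log(\eps/\delta)$ (up to constants).
\item At the midpoint $t$ of $v$'s interval, the prefix count $(A_nx)_t$ equals the baseline plus $k\sum_{u\in\mathrm{anc}(v)\cup\{v\}} s_u$, where each node contributes a term with sign determined by whether $t$ lies left or right of that node's midpoint.
\end{itemize}

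\textbf{The score argument.} Consider a root-to-leaf path. Along it we evaluate the mechanism's output at $L$ midpoints, and the true values form a $\pm k$ random walk. Let $e_j$ be the error at the $j$-th midpoint and define the bounded exponential score $\Phi=\E\exp(\lambda\sum_j \mathbf 1[\text{sign of } s_{v_j} \text{ guessed wrong}])$. The plan is to show by induction up the tree that privacy forces each level's guess of $s_v$ to be wrong with probability at least $e^{-O(\ell)}$, and that this holds conditionally on everything the mechanism has correctly resolved above $v$. The prefix-sum structure is what couples the levels: estimating $(A_nx)_t$ within error $<k/2$ at all midpoints along a path decodes all the signs. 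So if the mechanism had small $\ell_\infty$ error, the path signs would be decodable.

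To convert the per-level failure probability into an $L\cdot k$ error, we use a counting argument. There are $2^{j}$ nodes at level $j$, and their intervals are disjoint, so their decodings are essentially independent under the product distribution. Once $2^j\gtrsim e^{c\ell}$, some node at that level is wrong with constant probability. Being wrong means the mechanism's output deviates from the truth at that midpoint by $\Omega(k)$. Accumulating this along levels, since errors add in the random walk rather than cancel, gives deviation $\Omega(Lk)=\Omega(\log(n\eps)\,\ell/\eps)$ at some coordinate. For $\alpha_\infty$ we then need to show that the errors at different levels stack coherently along a single path. The exponential score $\exp(\lambda\cdot\#\text{wrong levels})$ is designed for exactly this, and bounding it via DP at each node is the core computation.

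\textbf{Squared error.} For MeanSE we modify the score so that it averages over all coordinates $t$, not just the worst path. A constant fraction of coordinates $t$ lie in $\Omega(L)$ subtrees, and in each of these the sign of a level-$j$ node affects $(A_nx)_t$ by about $k$. We lower bound the conditional variance of $(A_nx)_t$ given $M(x)$ by the sum over levels of $k^2$ times $\Pr[\text{level } j \text{ undecodable}]$. That per-level probability is $\Omega(1)$ for a $\Theta(1)$ fraction of levels only if we take $k\approx\ell/\eps$ with multiplicity. Getting $\ell^2L/\eps^2$ rather than $\ell L/\eps^2$ means each of the $L$ levels must contribute variance $k^2=\ell^2/\eps^2$. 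So we need a Fano/Assouad-type per-level bound: a node perturbation of size $k$ costing $e^{O(\ell)}$ is confused with constant probability once $e^{O(\ell)}$ nodes are averaged. We would apply Assouad's lemma at each level with the right weighting, sum over levels, and average over $t$.

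\textbf{Main obstacle.} The hardest step is the cross-level stacking. For the $\ell_\infty$ bound, we must show that errors from the $L$ levels add up at one coordinate rather than being spread across different coordinates; this is where the exponential score must be bounded inductively. For MeanSE, the corresponding difficulty is keeping per-level confusion constant while the perturbation is as large as $\ell/\eps$. I would first try a tilted-measure induction on the tree: a supermartingale bound on $\Phi$ built from the DP likelihood-ratio bound at each node.
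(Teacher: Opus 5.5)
\textbf{Gap: the cross-level stacking step you flag as the main obstacle.} It is not a technicality, and the per-node reasoning you describe cannot close it.

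Because your signs $s_v$ are independent and each is planted by its own $\Theta(k)$-bit change, every group-privacy comparison speaks about a single node. Testing $s_v$ against its flip costs $e^{k\eps}=e^{\Theta(\ell)}$, so a per-node argument only forces each sign to be misread with probability $e^{-\Theta(\ell)}$. Averaging over the $2^j$ nodes of a level does not raise this to a constant fraction; it only guarantees that \emph{some} node is misread. That gives an error of order $k$ at one unrelated coordinate per level. So you get $\ell_\infty$ error $\Omega(k)$ rather than $\Omega(Lk)$, and per-node (Assouad-type) testing gives a MeanSE bound of only about $Lk^2e^{-\Theta(k\eps)}$, which never exceeds $O(L/\eps^2)$.

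Nothing in your argument puts the misread levels on one root-to-leaf path, or makes wrong decisions along a path add at one coordinate. A tilted induction that multiplies per-node likelihood ratios along a path would pay $e^{\Theta(L\ell)}$.

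The construction is also not well defined as stated. Blocks of length $1/\eps$ cannot absorb a shift of $\ell/\eps$ ones, and superposing $L$ levels of such shifts leaves $\{0,1\}^n$. Moreover, moving ones from the left half to the right half changes prefix sums inside $v$'s interval by a one-signed, placement-dependent amount in $[-k,0]$, not by $\pm k s_v$.

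\textbf{Missing idea: one $k$-bit change that carries signal at all $H$ levels.} Your parameters ($k\approx\ell/\eps$, depth $\approx\log(n\eps)$, an exponential score, the $\delta$-slack accounting) are the right ones. What is missing is a single $k$-bit change whose signal reaches every level, so that one application of group privacy is weighed against an $H$-level event.

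The paper draws $X\sim\Uniform(\{0,1\}^{4^H})$, repeats each bit $k\approx u/\eps$ times, and flips a uniformly random leaf $J$. The shifted residual $Y=(M(E_k(X))-A_nE_k(X'))_{[q]}/k$ is then the normalized error $Z$ plus a $\pm1$ step after block $J$, visible at every ancestor of $J$. The score is $e^{-\tau N}$, where $N$ counts non-favored edges on $J$'s path. Favored children are chosen by thresholding probe differences of opposite-parity siblings.

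On the privacy side, one group-privacy step swaps the input to $E_k(X')$. The resulting residual is independent of $J$, so $N\sim\Bin(H,1/2)$ and $\E[S]\le e^{k\eps}((1+e^{-\tau})/2)^H+\delta_k$.

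On the accuracy side, errors stack because each misdirected edge is charged to a telescoping potential. Midrange stability gives $\E_J[N]\le 2r_{[q]}(Z)\le 4R/k$ for $\ell_\infty$, and the law of total variance gives $\E_J[N]\le 16V_{[q]}(Z)\le 16Q/k^2$ for MeanSE. Jensen then gives $\E[S]\ge e^{-4\tau\E[R]/k}$ (resp.\ $e^{-16\tau\E[Q]/k^2}$).

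With $\tau=u/H$ and $k\eps\le u/8$, comparing the two bounds forces $\E[R]\gtrsim kH$ and $\E[Q]\gtrsim k^2H$. The constant fraction of wrong levels thus comes from one privacy comparison, not from per-node testing.
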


While previous work resolved the fixed $\delta = \Theta(1)$ regime, arguably the $\delta = n^{-\Omega(1)}$ regime is more natural. Indeed, in the DP literature, it is commonly recommended to set $\delta \ll 1/n$ \cite{dworkrothbook,
complexitydp, 
near2025guidelines}, to limit the risk of catastrophic privacy failure.\footnote{The aptly named ``catastrophe mechanism'' releases the raw data of a single person at random, yet it trivially achieves $(0, 1/n)$-DP.}
In this regime, our lower bounds establish that BTM instantiated with Laplace noise is asymptotically optimal under all three error notions, even under approximate DP. This aligns with the observations of \cite{andersson2024count}, whose improved pure-DP mechanism achieves a smaller mean squared error bound than the Gaussian matrix mechanism of \cite{henzinger2023almost} for sufficiently small $\delta$.
Our bounds hold even in the easier offline setting \cite{cohenlowerbounds-continual} where the entire stream is known in advance, and therefore also apply to the online setting.

\begin{table}[h]
\centering
\renewcommand{\arraystretch}{1}
\begin{tabular}{lccccc}
\hline
\textbf{Regime of $\delta$} & $\alpha_\infty$ & Ref & MeanSE/MaxSE & Ref \\
\hline
$\delta = 0$ & $\Theta(\log^2 n)$  & \Cref{thm:lower_bound}  & $\Theta(\log^3 n)$ & \Cref{thm:lower_bound} \\
$\delta \le n^{-\Omega(1)}$ & $\Theta(\log^2 n)$ & \Cref{thm:lower_bound}  & $\Theta(\log^3 n)$ & \Cref{thm:lower_bound} \\
$\delta = \Theta(1)$ & $\Theta(\log^{3/2} n)$ & \cite{bairaktari2026binary} & $\Theta(\log^2 n)$   & \cite{henzinger2023almost} 
\label{tab:bounds}
\end{tabular}
\caption{Tight bounds for $\alpha_\infty$, MaxSE, and MeanSE across various regimes of $\delta$ for \emph{arbitrary mechanisms} assuming $0 < \eps \le 1$ is fixed. The $\delta = \Theta(1)$ regime applies to a small enough fixed $\delta$ in order to satisfy assumptions in \cite[Theorem 1]{bairaktari2026binary} or \cite[Theorem 4]{henzinger2023almost}. Previously only the $\delta = \Theta(1)$ case was completely characterized, and we complete the understanding for $0 \le \delta \le n^{-\Omega(1)}$. Note there are still open gaps in the $n^{-o(1)} \le \delta = o(1)$ regime. Note that while our lower bounds in this region improve the state of the art when $\log(1/\delta) = \omega(\sqrt{\log n})$ for all error notions, we do not prove tightness.} 
\end{table}

\paragraph{Our technique.} Our approach uses a similar tree geometry to that of \cite{bairaktari2026binary}. Here we construct a 4-ary tree $\mathcal{T}$ and a heuristic for labeling half the edges as more or less likely to contain differing examples based on numerical probes of the 4 children. We relate how well these edge labelings reveal the placement of the differing examples to both the accuracy of the mechanism and its privacy claims: an accurate mechanism will have probes that rarely mislabel edges, while privacy bounds how well those edges can be labeled. Specifically, the labeled edges give rise to a \emph{scoring function}, whose expectation can be directly compared to both the expected error as well as the privacy parameters. For $\delta\le n^{-a}$, with fixed $a>0$, this permits nontrivial group-privacy comparisons between streams differing in $\Theta(\log n)$ bits, exploiting the stronger privacy guarantees in this regime beyond what the moment constraints of \cite{bairaktari2026binary} capture.
\section{Preliminaries}

All unadorned logarithms are natural. We write $[d] := \{1, \dots, d\}$ and $\mathbf{1} \in \mathbb{R}^d$ as the all-ones vector. For a vector $v\in\mathbb R^d$ and $t\in[d]$, write $v_{[t]}:=(v_1,\ldots,v_t)\in\mathbb R^t$ for the restriction of $v$ to its first $t$ coordinates. We use the convention $\log(\cdot /0)=+\infty$.

\paragraph{Privacy.} We begin by formalizing the privacy notion specific to our setting and stating a useful group privacy lemma.

\begin{definition}[\cite{dworkrothbook}]
A randomized mechanism $M \colon \{0, 1\}^n \to \mathcal{Y}$ satisfies $(\eps, \delta)$-differential privacy (DP) if $\Pr(M(x) \in S) \le e^{\eps} \Pr(M(x') \in S) + \delta$ for all $x, x' \in \{0, 1\}^n$ with Hamming distance $d_H(x, x') \le 1$ and all measurable $S \subseteq \mathcal{Y}$. When $\delta = 0$, $M$ satisfies pure $\eps$-DP.
\end{definition}

\begin{lemma}[Group privacy]
\label{lem:group-privacy}
If $M$ is $(\varepsilon, \delta)$-DP, then for any $x, x' \in \{0, 1\}^n$ with $d_H(x, x') \le k$ and any measurable test function $f \colon \mathcal{Y} \to [0, 1]$,
\[
\E[f(M(x))] \le e^{k\varepsilon} \E[f(M(x'))] + \delta \frac{e^{k\varepsilon} - 1}{e^\varepsilon - 1}.
\]
\end{lemma}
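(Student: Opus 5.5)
The plan is to prove the group-privacy bound in two stages. First we pass from measurable sets to $[0,1]$-valued test functions for a single neighboring pair. Then we iterate along a path of neighbors.

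\emph{Step 1 (test functions, $k=1$).} Fix neighbors $y,y'$ with $d_H(y,y')\le 1$ and a measurable $f\colon\mathcal Y\to[0,1]$. Use the layer-cake representation
\[
\E[f(M(y))]=\int_0^1 \Pr\bigl(f(M(y))>s\bigr)\,ds .
\]
For each $s$, the set $S_s=\{f>s\}$ is measurable. The definition of $(\eps,\delta)$-DP then gives $\Pr(M(y)\in S_s)\le e^{\eps}\Pr(M(y')\in S_s)+\delta$. Integrating over $s\in[0,1]$ yields
\[
\E[f(M(y))]\le e^{\eps}\E[f(M(y'))]+\delta .
\]

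\emph{Step 2 (chaining).} Given $d_H(x,x')=j\le k$, build a path $x=z_0,z_1,\dots,z_j=x'$ by flipping one differing coordinate at a time, so that consecutive streams are neighbors. Write $a_i=\E[f(M(z_i))]$. Step 1 gives $a_{i}\le e^{\eps}a_{i+1}+\delta$. Unrolling this recursion by induction gives
\[
a_0\le e^{j\eps}a_j+\delta\sum_{i=0}^{j-1}e^{i\eps}=e^{j\eps}a_j+\delta\frac{e^{j\eps}-1}{e^{\eps}-1}.
\]

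\emph{Step 3 (from $j$ to $k$).} Both $e^{j\eps}$ and $(e^{j\eps}-1)/(e^\eps-1)$ are nondecreasing in $j$, and $a_j\ge 0$. Hence the bound for $j$ implies the stated bound for $k$. The case $j=0$ is trivial. If $\eps=0$, the fraction is read as its limit $k$, and the same sum gives $k\delta$.

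The only mildly delicate point is Step 1, namely extending the set-based definition to functions. The layer-cake integral handles it cleanly. Measurability of $s\mapsto\Pr(f(M(y))>s)$ is automatic, since this function is monotone.
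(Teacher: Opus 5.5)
Your proof is correct and uses the same two ingredients as the paper: the layer-cake representation to pass from measurable sets to $[0,1]$-valued tests, and induction along a path of Hamming neighbors. The only difference is order: the paper chains for indicator sets first and then integrates over the level sets $\{f>t\}$, whereas you apply layer-cake at $k=1$ and then chain on expectations, which makes no substantive difference.
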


We sometimes write $\delta_k = \delta \frac{e^{k\varepsilon} - 1}{e^\varepsilon - 1}$ for brevity.

\begin{proof}
For indicator tests $f = \mathds{1}_S$, the claim follows by standard induction over a $k$-step neighboring path in $\{0, 1\}^n$ (e.g. \cite[Lemma 2.2]{complexitydp}). For general $f \colon \mathcal{Y} \to [0, 1]$, layer-cake representation gives $\E[f(M(x))] = \int_0^1 \Pr(f(M(x)) > t) \, dt$. Applying the indicator bound to the level sets $S_t = \{y \in \mathcal{Y} : f(y) > t\}$ and integrating over $t \in [0, 1]$ yields the claim. 
\end{proof}

\paragraph{Probe and potential functions.}
For $v \in \mathbb{R}^d$ and non-empty $u \subseteq [d]$,
a \emph{probe} assigns a representative value to the coordinates
$(v_i)_{i \in u}$, while its associated \emph{potential} measures
their spread. We use two probe--potential pairs: \{mean, variance\} and \{midrange, range\}.

\begin{table}[ht]
\centering
\renewcommand{\arraystretch}{1.6}
\begin{tabular}{lll}
\hline
\textbf{Probe} & \textbf{Potential} & Use-case \\ \hline
$\mean = \mu_u(v) = \frac{1}{|u|} \sum_{i \in u} v_i$ & $V_u(v) = \frac{1}{|u|} \sum_{i \in u} (v_i - \mu_u(v))^2$ & $\MeanSE$ bounds \\[6pt]
$\midr = c_u(v) = \frac{1}{2} \left(\max_{i \in u} v_i + \min_{i \in u} v_i\right)$ & $r_u(v) = \max_{i \in u} v_i - \min_{i \in u} v_i$ & $\alpha_\infty$ bounds \\ \hline
\end{tabular}
\label{tab:probes_potentials}
\end{table}

Both potentials are nonnegative. Both probes are translation
equivariant, while their potentials are translation invariant:
for either pair $(\phi,P) \in \{(\mu,V),(c,r)\}$ and any
$a \in \mathbb{R}$,
\[
    \phi_u(v + a\mathbf{1}) = \phi_u(v) + a,
    \qquad
    P_u(v + a\mathbf{1}) = P_u(v).
\]

We will use the following lemmas relating probes with their associated potentials.
\begin{lemma}[Midrange stability]
\label{lem:midrange-stability}
For any $v \in \mathbb{R}^d$ and non-empty $u' \subseteq u \subseteq [d]$,
\[
\left| c_{u'}(v) - c_u(v) \right| \le \frac{r_u(v) - r_{u'}(v)}{2}.
\]
\end{lemma}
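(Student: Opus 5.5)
Write $M = \max_{i\in u} v_i$, $m = \min_{i \in u} v_i$, $M' = \max_{i\in u'} v_i$ and $m' = \min_{i\in u'} v_i$. The plan is to reduce the claim to elementary inequalities between these four extremes. Since $u' \subseteq u$ is non-empty, we have the nesting
\[
m \le m' \le M' \le M .
\]
The two quantities in the statement are then
\[
c_{u'}(v) - c_u(v) = \tfrac12\bigl((M'-M) + (m'-m)\bigr),
\qquad
r_u(v) - r_{u'}(v) = (M - M') + (m' - m).
\]
Both bracketed differences $a := M - M' \ge 0$ and $b := m' - m \ge 0$ are nonnegative by the nesting.

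In this notation, the quantity to bound becomes $c_{u'}(v) - c_u(v) = \tfrac12(b - a)$, and the right-hand side of the lemma is $\tfrac12(a+b)$. The claim then reduces to $|b - a| \le a + b$ for $a, b \ge 0$. This holds because $|b-a| \le \max(a,b) \le a+b$. Multiplying by $\tfrac12$ gives the stated inequality.

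There is no real obstacle here. The only point requiring care is to note the nonnegativity of $a$ and $b$, which follows from $u' \subseteq u$ (the max over a subset is no larger and the min is no smaller). Non-emptiness of $u'$ is needed so that $c_{u'}$ and $r_{u'}$ are defined.
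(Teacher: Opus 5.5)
Your proof is correct and is essentially the paper's argument: the paper notes that $[c_{u'}-r_{u'}/2,\,c_{u'}+r_{u'}/2]=[m',M']$ lies inside $[c_u-r_u/2,\,c_u+r_u/2]=[m,M]$ and compares endpoints. Your gaps $a=M-M'\ge 0$ and $b=m'-m\ge 0$ are exactly those two endpoint comparisons, and $|b-a|\le a+b$ handles both signs of $c_{u'}(v)-c_u(v)$ at once.
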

\begin{proof}
From $u' \subseteq u,$ we have $\min_{i \in u} v_i \leq \min_{i \in u'} v_i$ and $\max_{i \in u'} v_i \leq \max_{i \in u} v_i$. Now suppressing the argument $v$, this can be written as:
\[
[c_{u'}-r_{u'}/2,\;c_{u'}+r_{u'}/2]
\subseteq
[c_u-r_u/2,\;c_u+r_u/2].
\]
Comparing either of the endpoints gives
$|c_{u'}-c_u|\le(r_u-r_{u'})/2$.
\end{proof}

\begin{lemma}[Law of total variance]
\label{lem:variance-decomposition}
Let $v \in \mathbb{R}^d$, and let $u_1,\ldots,u_B$
partition a non-empty set $u \subseteq [d]$ into non-empty
parts. Writing $w_i = |u_i|/|u|$, we have
\[
    V_u(v) - \sum_{i=1}^B w_i V_{u_i}(v)
    =
    \sum_{i=1}^B w_i
    \bigl(\mu_{u_i}(v)-\mu_u(v)\bigr)^2.
\]
\end{lemma}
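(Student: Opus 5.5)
The plan is to expand the total variance $V_u(v)$ around the overall mean and split the sum according to the parts of the partition. Write $\mu = \mu_u(v)$ and $\mu_i = \mu_{u_i}(v)$. Since the $u_i$ partition $u$, we have
\[
V_u(v) = \frac{1}{|u|}\sum_{i=1}^B \sum_{j \in u_i} (v_j - \mu)^2 .
\]
Inside each part we insert $\mu_i$, writing $v_j - \mu = (v_j - \mu_i) + (\mu_i - \mu)$, and expand the square into three terms.

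The first term is $\sum_{j\in u_i}(v_j-\mu_i)^2 = |u_i|\,V_{u_i}(v)$. The middle term is
\[
2(\mu_i-\mu)\sum_{j\in u_i}(v_j-\mu_i),
\]
and it vanishes because $\mu_i$ is, by definition, the mean of $(v_j)_{j\in u_i}$. The last term is $|u_i|(\mu_i-\mu)^2$, since it is constant in $j$. Dividing by $|u|$ and using $w_i = |u_i|/|u|$ gives
\[
V_u(v) = \sum_i w_i V_{u_i}(v) + \sum_i w_i(\mu_i-\mu)^2 .
\]
Rearranging this yields exactly the stated identity.

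There is no real obstacle here. The only point needing care is the vanishing cross term, which is where the choice of the part means $\mu_{u_i}$ as centers is essential. Non-emptiness of the parts guarantees that each $\mu_{u_i}$ and $V_{u_i}$ is well defined. Equivalently, the identity is the law of total variance for $J$ uniform on $u$ and conditioning on which part contains $J$, but the direct algebra above is self-contained.
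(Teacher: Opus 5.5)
Your proof is correct and is essentially the paper's argument: both split $v_j-\mu_u(v)=(v_j-\mu_{u_i}(v))+(\mu_{u_i}(v)-\mu_u(v))$, expand the square, note the cross terms vanish because $\sum_{j\in u_i}(v_j-\mu_{u_i}(v))=0$, and divide by $|u|$.
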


\begin{proof}
For $j\in u_i$, expand
$v_j-\mu_u(v)=(v_j-\mu_{u_i}(v))+(\mu_{u_i}(v)-\mu_u(v))$.
Summing squares over each part eliminates the cross terms,
since $\sum_{j\in u_i}(v_j-\mu_{u_i}(v))=0$.
Dividing by $|u|$ gives the identity.
\end{proof}
\section{Our lower bounds}
\subsection{Proof overview}
Our dataset instances are defined on an active prefix of length $q = k 4^H \le n$ (with coordinates beyond $q$ padded with zeros), partitioned into $m = 4^H$ blocks of size $k$, with each $k$-block corresponding to a leaf of a complete 4-ary tree of height $H$.
To construct the instance, sample $X \sim \Uniform(\{0,1\}^m)$, apply $E_k$ which repeats each bit $k$ times.
Comparison inputs are formed by choosing an independent uniform leaf $J$ and flipping its bit, giving $X' = X \oplus e_J$ and $E_k(X')$.

The central quantity in our proof is the expected score $\E[S(J, Y)]$ of the target leaf $J$, evaluated on the shifted residual $Y = \frac{(W - A_n E_k(X'))_{[q]}}{k}$, where $W = M(E_k(X))$.
Crucially, $Y$ compares the mechanism output on the \emph{original stream} $X$ against the exact prefix sums of the \emph{flipped stream} $X'$. Because the mechanism answered queries for $X$ but $Y$ subtracts $X'$, the residual decomposes into the mechanism's normalized error plus an exact signed step signal of magnitude $\pm 1$ whose transition lies inside block $J$.

Our scoring function $S$ observes all levels in the hierarchy and is designed to apply a penalty whenever this signed step is not observed. At each internal node $u$, we label its children $\{u_0, u_1, u_2, u_3\}$ from left to right. We then mark the edge to exactly one even and one odd child as ``favored'' to include $J$. For example, for the odd child we observe a threshold on the probe function $|\phi_{u_2}-\phi_{u_0}|$ to determine whether to favor $u_1$ vs. $u_3$. If the probe difference is small ($<1/2$), it is unlikely that $J$ is in the node between those probes. Using opposite-parity probes ensures that the comparison relevant to the path-to-$J$ edge avoids the changed block $J$ itself and sees either no shift or a signed unit shift. Our score is defined as $S = e^{-\tau N}$ where $N$ counts non-favored edges on the path to $J$. 

The lower bound then proceeds by sandwiching $\E[S(J, Y)]$ from both sides:
\begin{itemize}
    \item \textbf{Accuracy lower bound (\Cref{lem:accuracy-bound}):} Because $Y$ contains the step signal, the decoder scores $J$ highly unless the mechanism makes large errors. We charge mistakes to a telescoping potential and apply Jensen's inequality to obtain $\E[S(J, Y)] \ge e^{-O(\tau \cdot \mathrm{Error})}$.
    \item \textbf{Privacy upper bound (\Cref{lem:privacy-bound}):} By group privacy, replacing the mechanism input $E_k(X)$ with the flipped input $E_k(X')$ bounds $\E[S(J, Y)] \le e^{k\eps}\E[S(J, Y')] + \delta_k$, where $Y' = (M(E_k(X')) - A_n E_k(X'))_{[q]} / k$. In this reference run, $Y'$ has zero step signal. Furthermore, since $X'$ is statistically independent of $J$, $Y'$ carries no information about $J$, forcing $N(J, Y') \sim \Bin(H, 1/2)$ and yielding the exponentially small score $\E[S(J, Y')] = ((1 + e^{-\tau})/2)^H$.
\end{itemize}
Balancing the accuracy lower bound against the privacy upper bound forces the mechanism error to be large, establishing the tight lower bounds. The same recipe works for $\alpha_\infty$ as well as $\MeanSE$. The only difference is which probe and potential functions are used.

\paragraph{Intuition: a betting game.} We can think of a player participating in a betting game over the shifted instance, where their winnings are proportional to their bet on the target leaf $J$. The goal of the player is to maximize the \emph{expected payoff}. The player's strategy is to use probes to direct larger bets toward leaves that are more likely to contain $J$. As they descend down the tree, they divide the pot proportionally where favored edges get weight 1 and non-favored edges get weight $e^{-\tau}.$ $\tau>0$ is a fixed parameter that indicates how much the player trusts the probes, directing larger bets along the favored edges. This fixes a (possibly non-optimal) player who bets proportional to the score $S(i, Y)$ on the $i$th leaf. Their expected payoff is then proportional to $\E[S(J, Y)]$, the score on the target leaf. Their expected winnings are bounded from below by the accuracy claim and bounded from above by the privacy claim.

\begin{figure}[h]
  \centering
  \resizebox{\linewidth}{!}{
\definecolor{tdrust}{RGB}{184, 78, 26}
\definecolor{tdblue}{RGB}{26, 95, 180}
\definecolor{tdink}{RGB}{46, 52, 54}

\begin{tikzpicture}[
    x=1cm, y=1cm, yscale=-1,
td fav/.style={draw=tdink!75, line width=0.55pt},
  td non/.style={draw=tdink!50, line width=0.55pt, dash pattern=on 2.5pt off 1.5pt},
  td path/.style={draw=tdrust, line width=1.35pt},
td name/.style={circle, draw=tdink!85, fill=white, inner sep=0pt, minimum size=15pt, font=\fontsize{8}{9}\selectfont},
  td dot/.style={circle, draw=tdink!80, fill=white, inner sep=0pt, minimum size=3pt},
  td tiny/.style={font=\fontsize{8}{8}\selectfont},
  td small/.style={font=\fontsize{8.5}{9.5}\selectfont},
  td label/.style={font=\fontsize{9}{10}\selectfont\bfseries, text=tdrust}
  ]

  \draw [td fav] (3.00, 0.20) -- (3.50, 0.20);
  \node [anchor=west, td tiny] at (3.55, 0.20) {favored: $1$};

  \draw [td non] (5.60, 0.20) -- (6.10, 0.20);
  \node [anchor=west, td tiny] at (6.15, 0.20) {nonfavored: $e^{-\tau}$};

  \draw [td path] (8.70, 0.20) -- (9.20, 0.20);
  \node [anchor=west, td tiny] at (9.25, 0.20) {true path};

  \foreach \idx in {0,...,63} {
    \pgfmathsetmacro{\xx}{1.20 + \idx * 0.190}
    \coordinate (td-3-\idx) at (\xx, 3.35);
  }

  \foreach \idx in {0,...,15} {
    \pgfmathsetmacro{\xx}{1.485 + \idx * 0.760}
    \coordinate (td-2-\idx) at (\xx, 2.50);
  }

  \foreach \idx in {0,...,3} {
    \pgfmathsetmacro{\xx}{2.625 + \idx * 3.040}
    \coordinate (td-1-\idx) at (\xx, 1.65);
  }

  \coordinate (td-0-0) at (7.185, 0.75);

  \draw [td non] (td-0-0) -- (td-1-0);
  \draw [td fav] (td-0-0) -- (td-1-1);
  \draw [td fav] (td-0-0) -- (td-1-2);
  \draw [td non] (td-0-0) -- (td-1-3);

  \foreach \parent in {0,...,3} {
    \pgfmathtruncatemacro{\parity}{int(mod(\parent,4))}
    \foreach \child in {0,...,3} {
      \pgfmathtruncatemacro{\target}{int(4*\parent+\child)}
      \def\edgefav{0}
      \ifnum\parity=0 \ifnum\child=0 \def\edgefav{1}\fi \ifnum\child=3 \def\edgefav{1}\fi \fi
      \ifnum\parity=1 \ifnum\child=0 \def\edgefav{1}\fi \ifnum\child=3 \def\edgefav{1}\fi \fi
      \ifnum\parity=2 \ifnum\child=1 \def\edgefav{1}\fi \ifnum\child=2 \def\edgefav{1}\fi \fi
      \ifnum\parity=3 \ifnum\child=0 \def\edgefav{1}\fi \ifnum\child=1 \def\edgefav{1}\fi \fi
      \ifnum\edgefav=1
        \draw [td fav] (td-1-\parent) -- (td-2-\target);
      \else
        \draw [td non] (td-1-\parent) -- (td-2-\target);
      \fi
    }
  }

  \foreach \parent in {0,...,15} {
    \pgfmathtruncatemacro{\parity}{int(mod(\parent,4))}
    \foreach \child in {0,...,3} {
      \pgfmathtruncatemacro{\target}{int(4*\parent+\child)}
      \def\edgefav{0}
      \ifnum\parity=0 \ifnum\child=0 \def\edgefav{1}\fi \ifnum\child=3 \def\edgefav{1}\fi \fi
      \ifnum\parity=1 \ifnum\child=0 \def\edgefav{1}\fi \ifnum\child=3 \def\edgefav{1}\fi \fi
      \ifnum\parity=2 \ifnum\child=1 \def\edgefav{1}\fi \ifnum\child=2 \def\edgefav{1}\fi \fi
      \ifnum\parity=3 \ifnum\child=0 \def\edgefav{1}\fi \ifnum\child=1 \def\edgefav{1}\fi \fi
      \ifnum\edgefav=1
        \draw [td fav] (td-2-\parent) -- (td-3-\target);
      \else
        \draw [td non] (td-2-\parent) -- (td-3-\target);
      \fi
    }
  }

  \draw [td path] (td-0-0) -- (td-1-2);
  \draw [td path, dash pattern=on 3pt off 2pt] (td-1-2) -- (td-2-9);
  \draw [td path] (td-2-9) -- (td-3-39);

  \node [td name, draw=tdrust, line width=0.7pt] at (td-0-0) {$[m]$};

  \node [td name] at (td-1-0) {$u_0$};
  \node [td name, draw=tdblue, fill=tdblue!10, line width=0.7pt] at (td-1-1) {$u_1$};
  \node [td name, draw=tdrust, text=tdrust, line width=0.8pt] at (td-1-2) {$u_2$};
  \node [td name, draw=tdblue, fill=tdblue!10, line width=0.7pt] at (td-1-3) {$u_3$};

  \node [anchor=east, td tiny, text=tdblue] at (5.35, 1.65) {probe: $+0$};
  \node [anchor=west, td tiny, text=tdblue] at (12.05, 1.65) {probe: $+s$};

  \foreach \idx in {0,...,15} \node [td dot] at (td-2-\idx) {};
  \node [td dot, draw=tdrust, fill=tdrust, minimum size=4.5pt] at (td-2-9) {};

  \foreach \idx in {0,...,63} \fill [tdink!80] (td-3-\idx) circle [radius=0.65pt];
  \fill [tdrust] (td-3-39) circle [radius=1.8pt];

  \node [td tiny, anchor=east, text=black!70] at (0.60, 0.35) {depth};
  \node [td small, anchor=east] at (0.60, 0.75) {$0$};
  \node [td small, anchor=east] at (0.60, 1.65) {$1$};
  \node [td small, anchor=east] at (0.60, 2.50) {$2$};
  \node [td small, anchor=east] at (0.60, 3.35) {$H{=}3$};

  \node [anchor=north, td label] at (8.61, 3.35) {$j$};
  \node [anchor=north, td small] at (7.185, 3.6) 
    {$S(j,y)=e^{-\tau N(j,y)}$, where $N(j, y)$ counts nonfavored path edges.};


    \end{tikzpicture}}
  \vspace{-20pt}
\caption{Whole-tree decoder ($H = 3$ shown). Each internal node favors one child per parity. Our privacy and accuracy bounds are obtained from analyzing the \emph{expected} score of $J$ against a residual which includes the signed shift.} \label{fig:whole-tree-decoder}
\end{figure}

\subsection{Tree setup and definitions}
Let $m, k, n$ be positive integers such that $q := km \le n$, where $m = 4^H$ for an integer $H \ge 1$. We define the following components:
\begin{itemize}
\item \textbf{Expansion operator:} $E_k \colon \{0,1\}^m \to \{0,1\}^n$ repeats each bit $k$ times and applies padding\footnote{This padding minimally affects our analysis. Our tree lemmas operate only on the elements covered by the tree, and the proof of \Cref{thm:lower_bound} only needs $q \ge n/4$ for the $\MeanSE$ analysis.} when $q < n$:
\[
(E_k(x))_t = \begin{cases} x_{\lceil t/k \rceil} & \text{if } 1 \le t \le q, \\ 0 & \text{if } q < t \le n. \end{cases}
\]

\item \textbf{The $4$-ary tree $\mathcal{T}$:} The complete tree of height $H$. The root represents the full block index set $[m]$. Each internal node $u$ corresponds to a block range $[a, b] \subseteq [m]$ of length $L = b - a + 1$ (a power of $4$), with four children partitioning $[a, b]$ into quarters:
\begin{align*}
u_0 &= [a, \, a + L/4 - 1], & u_1 &= [a + L/4, \, a + 2L/4 - 1], \\
u_2 &= [a + 2L/4, \, a + 3L/4 - 1], & u_3 &= [a + 3L/4, \, b].
\end{align*}
Each leaf $j \in [m]$ corresponds to the $j$-th coordinate block $I_j = \{(j-1)k + 1, \dots, jk\}$. More generally, each node $u = [a, b]$ is associated with the stream coordinate interval $I_u = \bigcup_{j=a}^b I_j = \{(a-1)k+1, \dots, bk\} \subseteq [q]$.

For any vector $y \in \R^q$, we write probe functions $\phi$ and potential functions $P$ as 
$\phi_u(y) :=\phi_{I_u}(y)$ and $P_u(y) := P_{I_u}(y)$.

\item \textbf{Favored edges:} At each internal node $u \in \mathcal{T}_{\mathrm{int}}$, the decoder selects an even child $e_u^\phi(y) \in \{u_0, u_2\}$ and an odd child $o_u^\phi(y) \in \{u_1, u_3\}$ via the cross-parity rule:
\[
e_u^\phi(y) = \begin{cases} 
u_2 & \text{if } |\phi_{u_1}(y) - \phi_{u_3}(y)| \ge 1/2, \\ 
u_0 & \text{if } |\phi_{u_1}(y) - \phi_{u_3}(y)| < 1/2, 
\end{cases}
\qquad
o_u^\phi(y) = \begin{cases} 
u_1 & \text{if } |\phi_{u_0}(y) - \phi_{u_2}(y)| \ge 1/2, \\ 
u_3 & \text{if } |\phi_{u_0}(y) - \phi_{u_2}(y)| < 1/2. 
\end{cases}
\]
The set of favored edges across the entire tree is
\[
E^\phi_{\mathrm{fav}}(y) = \bigcup_{u \in \mathcal{T}_{\mathrm{int}}} \Bigl\{ (u, \, e_u^\phi(y)), \; (u, \, o_u^\phi(y)) \Bigr\}.
\]

\item \textbf{Non-favored edge count and score:} Let $\mathrm{path}(j)$ denote the set of $H$ parent-child pairs on the unique path from the root to leaf $j$. The number of non-favored edges along this path and the resulting score are
\[
N_\phi(j, y) = |\mathrm{path}(j) \setminus E^\phi_{\mathrm{fav}}(y)|, \qquad S_\phi(j, y) = e^{-\tau N_\phi(j, y)}.
\]
\end{itemize}

\subsection{Tree lemmas}
\begin{lemma}[Accuracy lower bound] \label{lem:accuracy-bound}
Let $n, H, k$ be positive integers with $q = k 4^H \le n$, and let $m = 4^H$. Sample $X \sim \Uniform(\{0,1\}^m)$ and $J \sim \Uniform([m])$, and set $X' = X \oplus e_J$. Let $M \colon \{0,1\}^n \to \R^n$ be a mechanism, let $W = M(E_k(X))$, and define the shifted residual $Y = \frac{(W - A_n E_k(X'))_{[q]}}{k}$. Then for any $\tau > 0$,
\begin{align*}
\E_{X, J, M}[S_{\midr}(J, Y)] & \ge \E_{X, M}[e^{-4\tau R / k}], \\
\E_{X, J, M}[S_{\mean}(J, Y)] & \ge \E_{X, M}[e^{-16\tau Q / k^2}],
\end{align*}
where $R = \|W - A_n E_k(X)\|_\infty$ and $Q = \frac{1}{q}\sum_{t \le q}(W - A_n E_k(X))_t^2$.
\end{lemma}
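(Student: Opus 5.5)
The plan is to split the shifted residual into the normalized mechanism error plus an explicit step. I would then show that, at each node on the path to $J$, a non-favored edge forces a large probe discrepancy in the error alone, and charge these discrepancies to a potential that telescopes down the tree. Write
\[
Y = Z + D, \qquad Z = \frac{(W - A_n E_k(X))_{[q]}}{k}, \qquad D = \frac{(A_n E_k(X) - A_n E_k(X'))_{[q]}}{k}.
\]
Since $X' = X\oplus e_J$, the vector $D$ vanishes on all blocks before $I_J$ and equals a constant $s = X_J - X'_J \in\{\pm1\}$ on all blocks after $I_J$; it ramps only inside $I_J$.

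The key structural observation is that $Z$ is a function of $(X, M)$ only, so it is independent of $J$. I will therefore condition on $(X,M)$, apply Jensen to the convex map $x\mapsto e^{-\tau x}$ to get $\E_J[S_\phi(J,Y)] \ge e^{-\tau\E_J[N_\phi(J,Y)]}$, and then bound $\E_J[N_\phi(J,Y)]$ deterministically in terms of $Z$. Taking $\E_{X,M}$ afterwards gives the claimed inequalities.

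\textbf{Step 1 (local criterion).} Fix a path node $u$ whose child $u_c$ contains $J$. The decision that matters for the edge $(u,u_c)$ is the cross-parity comparison between the two siblings $u_a, u_b$ of the opposite parity. Neither sibling contains $J$, so $D$ is constant ($0$ or $s$) on each of them. I check the four cases:
\begin{itemize}
\item If $c \in \{0,3\}$, both siblings lie on the same side of $J$, so $D$ takes the same constant on $u_a$ and on $u_b$.
\item If $c \in \{1,2\}$, the siblings straddle $J$, so $D$ equals $0$ on one and $s$ on the other.
\end{itemize}
By translation equivariance of $\phi$, the rule favors $u_c$ exactly when the no-shift or unit-shift pattern is correctly detected. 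Hence a non-favored edge at $u$ forces $|\phi_{u_a}(Z) - \phi_{u_b}(Z)| \ge 1/2$ in all cases. Indeed, if $|\phi_{u_a}(Z)-\phi_{u_b}(Z)| < 1/2$, then $|\phi_{u_a}(Y)-\phi_{u_b}(Y)|$ is $<1/2$ in the equal-constant case and $>1/2$ in the unit-shift case.

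\textbf{Step 2 (charging, midrange).} I use $\mathds 1[|c_{u_a}-c_{u_b}|\ge 1/2] \le 2|c_{u_a}-c_{u_b}|$. The triangle inequality through $c_u$, together with \Cref{lem:midrange-stability}, bounds this by $(r_u - r_{u_a}) + (r_u - r_{u_b})$. Averaging over the uniform child $c$ of $u$: the pair $\{u_0,u_2\}$ is used for $c$ odd and $\{u_1,u_3\}$ for $c$ even, each with probability $1/2$. This gives an expected charge at $u$ of at most $2\bigl(r_u - \tfrac14\sum_i r_{u_i}\bigr)$, with all quantities evaluated at $Z$. Because $J$ uniform makes the depth-$\ell$ path node uniform among the depth-$\ell$ nodes, summing over depths telescopes. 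With $\Phi_\ell$ the average of $r_u(Z)$ over depth-$\ell$ nodes, we get $\E_J N_{\midr} \le 2\sum_\ell(\Phi_\ell - \Phi_{\ell+1}) \le 2\Phi_0$, using $r\ge 0$ at the leaves. Finally $\Phi_0 = r_{[m]}(Z) \le 2\|Z\|_\infty \le 2R/k$, which yields the exponent $4\tau R/k$.

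\textbf{Step 3 (mean).} I use $\mathds 1[|\mu_{u_a}-\mu_{u_b}|\ge1/2] \le 4(\mu_{u_a}-\mu_{u_b})^2 \le 8[(\mu_{u_a}-\mu_u)^2 + (\mu_{u_b}-\mu_u)^2]$. Averaging over $c$ as before gives $4\sum_i(\mu_{u_i}-\mu_u)^2 = 16\sum_i w_i(\mu_{u_i}-\mu_u)^2$ with $w_i = 1/4$. By \Cref{lem:variance-decomposition} this equals $16\bigl(V_u - \tfrac14\sum_i V_{u_i}\bigr)$. The same telescoping gives $\E_J N_{\mean} \le 16 V_{[m]}(Z) \le 16\cdot\frac1q\sum_{t\le q} Z_t^2 = 16Q/k^2$.

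\textbf{Main obstacle.} The step needing the most care is Step 1: verifying in all four child positions that the opposite-parity siblings really see either no relative shift or exactly a unit shift. This uses $J \notin u_a\cup u_b$, so that the ramp inside $I_J$ never enters a probe. I also need the strict versus non-strict threshold conventions to line up with the inequalities used in Steps 2 and 3.
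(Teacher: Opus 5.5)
Your proposal is correct and follows essentially the same argument as the paper. You split $Y = Z + D$ with $Z$ independent of $J$ given $(X,M)$, and use the cross-parity case check to show that a non-favored edge forces $|d|\ge 1/2$ on the error alone. You then charge via \Cref{lem:midrange-stability} and \Cref{lem:variance-decomposition} to $2\bigl(r_u - \tfrac14\sum_i r_{u_i}\bigr)$ and $16\bigl(V_u - \tfrac14\sum_i V_{u_i}\bigr)$, telescope over depths (your $\Phi_\ell$ is the paper's $|u|/m$-weighted sum), and finish with conditional Jensen, giving exactly the constants $4\tau R/k$ and $16\tau Q/k^2$.
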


\begin{proof}
Fix $X,W$ and write the normalized (non-shifted) error
\[
Z=\frac{(W-A_nE_k(X))_{[q]}}{k}.
\]
The vector $Y-Z$ is zero on blocks before $J$ and equals
$s=2X_J-1\in\{-1,1\}$ on blocks after $J$.
The probes used to select the edge toward $J$ avoid block $J$ itself. Define

$$
p_u^\phi = \Pr_J(\text{the edge from $u$ toward $J$ is non-favored} \mid J \in u, X, W).
$$

Set $d_{02} = \phi_{u_2}(Z) - \phi_{u_0}(Z)$ and $d_{13} = \phi_{u_3}(Z) - \phi_{u_1}(Z)$. Conditioned on $J \in u$, the target index $J$ falls into each child $u_i$ ($i \in \{0,1,2,3\}$) with probability $1/4$.
When $J \in u_0$ or $J \in u_3$, the shift $s$ is identical across the two probed children, yielding an error if and only if the unshifted difference is large, i.e. $|d| \ge 1/2$. When $J \in u_1$ or $J \in u_2$, the shift creates an offset of magnitude $|s| = 1$, so a decision error requires $|d + s| < 1/2$, which implies $|d| \ge 1/2$. Averaging over all four children gives
\[
p_u^\phi = \frac{1}{4}\sum_{i=0}^3 \Pr(\text{error} \mid J\in u_i, X, W) \le \frac{1}{2} \mathds{1}_{\{|d_{02}| \ge 1/2\}} + \frac{1}{2} \mathds{1}_{\{|d_{13}| \ge 1/2\}}.
\]
For midranges, we use $\frac{1}{2}\mathds{1}_{\{|d| \ge 1/2\}} \le |d|$ for any $d \in \{d_{02},d_{13}\}$ to get
\begin{align}
p_u^{\midr} &\le |c_{u_2}(Z) - c_{u_0}(Z)| + |c_{u_3}(Z) - c_{u_1}(Z)|\nonumber\\
&\le \sum_{i=0}^3 |c_{u_i}(Z) - c_u(Z)| && \text{triangle inequality}\nonumber\\
&\le 2\left(r_u(Z) - \frac{1}{4}\sum_{i=0}^3 r_{u_i}(Z)\right). && \text{\Cref{lem:midrange-stability}}\label{eq:pu-mid}
\end{align}
For means, we use $\frac{1}{2}\mathds{1}_{\{|d| \ge 1/2\}} \le 2 d^2$ and $(a - b)^2 \le 2(a - c)^2 + 2(b - c)^2$ with $c = \mu_u(Z)$ to yield

\begin{equation}\label{eq:pu-mean}
p_u^{\mean} \le 4 \sum_{i=0}^3 (\mu_{u_i}(Z) - \mu_u(Z))^2 \overset{\Cref{lem:variance-decomposition}\rule[-1.25ex]{0pt}{0pt}}{=}
16\left(V_u(Z) - \frac{1}{4}\sum_{i=0}^3 V_{u_i}(Z)\right).
\end{equation}

Let $C_\phi = 2$ if $\phi = \midr$ and $16$ if $\phi = \mean$. Since a node $u$ is visited by $J$ with probability $|u|/m$, the expected number of non-favored edges on $J$ is given by (for either potential $P \in \{r, V\}$),

\begin{align*}
\E_J[N_\phi(J,Y) \mid X,W]
&= \sum_{u \in \mathcal{T}_{\mathrm{int}}} \frac{|u|}{m} p_u^\phi \\
&\le C_\phi \sum_{u \in \mathcal{T}_{\mathrm{int}}} \left( \frac{|u|}{m} P_u(Z) - \sum_{i=0}^3 \frac{|u_i|}{m} P_{u_i}(Z) \right) && \text{\Cref{eq:pu-mid,eq:pu-mean}}\\
&= C_\phi \left( P_{[q]}(Z) - \sum_{\ell \text{ leaf}} \frac{|\ell|}{m} P_\ell(Z) \right) && \text{telescoping sum} \\
&\le C_\phi P_{[q]}(Z) && \text{since } P_{\ell}(Z) \ge 0 
\end{align*}

Finally, $r_{[q]}(Z) \le 2 \|Z\|_\infty \le \frac{2R}{k}$ and $V_{[q]}(Z) \le \frac{1}{q}\sum_{t \le q} Z_t^2 = \frac{Q}{k^2}$. Conditional Jensen gives $\E_J[e^{-\tau N_\phi} \mid X, W] \ge e^{-\tau \E_J[N_\phi \mid X, W]}$, and averaging over $X$ and mechanism randomness $M$ completes the proof.
\end{proof}

\begin{lemma}[Privacy upper bound]
\label{lem:privacy-bound}
Let $n, H, k$ be positive integers with $k 4^H \le n$, and $M \colon \{0,1\}^n \to \R^n$ be an $(\eps, \delta)$-differentially private mechanism. Under the same distributions for $X, J$, and the residual $Y$ defined in \Cref{lem:accuracy-bound}, we have for any $\tau > 0$ and $\phi \in \{\mean, \midr\}$:
\[
    \E_{X,J,M}[S_\phi(J,Y)] \le e^{k\eps} \left(\frac{1+e^{-\tau}}{2}\right)^H + \delta\frac{e^{k\eps}-1}{e^\eps - 1}.
\]
\end{lemma}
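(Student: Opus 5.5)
We apply group privacy (\Cref{lem:group-privacy}) conditionally on $(X,J)$, and then evaluate the resulting reference expectation exactly using the independence of $X'$ and $J$.

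\textbf{Step 1: group privacy.} Fix $(X,J)$, so $X'=X\oplus e_J$ is also fixed. The streams $E_k(X)$ and $E_k(X')$ differ exactly on block $I_J$, so $d_H(E_k(X),E_k(X'))=k$. Define the test function $f_{X,J}(w) = S_\phi\bigl(J,(w - A_nE_k(X'))_{[q]}/k\bigr)$. Since $S_\phi=e^{-\tau N_\phi}$ with $N_\phi\ge 0$, this takes values in $(0,1]$. It is measurable because $N_\phi$ is a finite combination of indicators of threshold events on continuous probes. \Cref{lem:group-privacy} then gives
\[
\E_M[S_\phi(J,Y)\mid X,J]\le e^{k\eps}\,\E_M[S_\phi(J,Y')\mid X,J]+\delta_k,
\qquad Y' = \frac{(M(E_k(X'))-A_nE_k(X'))_{[q]}}{k}.
\]
Averaging over $(X,J)$ reduces the claim to showing $\E[S_\phi(J,Y')]=\bigl(\tfrac{1+e^{-\tau}}{2}\bigr)^H$.

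\textbf{Step 2: change variables to $(X',J)$.} The map $(X,J)\mapsto(X',J)$ is a bijection of $\{0,1\}^m\times[m]$. Since $(X,J)$ is uniform on this set, so is $(X',J)$. Hence $X'$ is uniform and independent of $J$. The quantity $Y'$ is a function of $X'$ and the mechanism's internal randomness only. So $J$ is uniform on $[m]$ and independent of $Y'$, and it suffices to prove that for every fixed $y\in\R^q$,
\[
\frac1m\sum_{j\in[m]} e^{-\tau N_\phi(j,y)}=\Bigl(\frac{1+e^{-\tau}}{2}\Bigr)^H.
\]

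\textbf{Step 3: exact tree computation.} For fixed $y$, the favored edges are determined. At every internal node, exactly one even and one odd child are favored, so exactly two of the four children are favored and two are not. We induct on height, defining $F(u)=\frac{1}{|u|}\sum_{j\in u}e^{-\tau N^{(u)}(j,y)}$, where $N^{(u)}$ counts non-favored edges on the path from $u$ to $j$. At a leaf, $F=1$. For an internal node,
\[
F(u)=\frac14\sum_{i}w_iF(u_i),
\]
where $w_i=1$ for the two favored children and $w_i=e^{-\tau}$ for the other two. Inductively, each $F(u_i)$ equals $\bigl(\tfrac{1+e^{-\tau}}{2}\bigr)^{h-1}$ for children of height $h-1$. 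Therefore
\[
F(u)=\frac{2+2e^{-\tau}}{4}\Bigl(\frac{1+e^{-\tau}}{2}\Bigr)^{h-1}=\Bigl(\frac{1+e^{-\tau}}{2}\Bigr)^{h}.
\]
The induction works even though the favored choices depend on $y$ in different ways at different nodes, because each node contributes the same factor regardless of which children it favors. Equivalently, $N_\phi(J,y)\sim\Bin(H,1/2)$.

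\textbf{Main obstacle.} The only delicate point is Step 2: justifying that $Y'$ is independent of $J$ even though $X'$ was built from $J$. The bijection argument settles this. The remaining steps are routine, together with the measurability and $[0,1]$-range check needed to apply \Cref{lem:group-privacy}.
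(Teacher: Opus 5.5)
Your proposal is correct and follows the same route as the paper. You apply group privacy conditionally on $(X,J)$ with the test $w \mapsto S_\phi\bigl(J,(w-A_nE_k(X'))_{[q]}/k\bigr)$, then use the independence of $Y'$ from $J$ to evaluate the reference score exactly. Your bijection argument and height induction simply make explicit the independence and $\Bin(H,1/2)$ claims that the paper asserts directly.
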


\begin{proof}
Let $W'$ be a fresh run of $M(E_k(X'))$ with residual $Y' = \frac{(W' - A_n E_k(X'))_{[q]}}{k}$. For fixed $X=x,J=j$, the encoded inputs $E_k(x),E_k(x')$
differ in $k$ coordinates. Apply \Cref{lem:group-privacy} to the fixed
bounded test $w\mapsto S_\phi(j,\frac{(w - A_n E_k(x'))_{[q]}}{k})$, then average over $X,J$:
\[
\begin{aligned}
\E_{X,J,M}[S_\phi(J,Y)]
&\le e^{k\eps} \E_{X,J,M}[S_\phi(J,Y')]+\delta_k
=e^{k\eps}
  \left(\frac{1+e^{-\tau}}2\right)^H+\delta_k.
\end{aligned}
\]

For the equality, since $X'$ and the mechanism randomness are jointly independent of $J$, the reference residual $Y'$ is independent of $J$. Conditioned on any realization $Y' = y$, the target leaf $J \sim \Uniform([m])$ descends into each child with probability $1/4$ at every step. Because exactly two children are favored at each node (for either choice of $\phi$), the non-favored edge indicators at each level are independent $\mathrm{Bernoulli}(1/2)$ variables, so $N_\phi(J, y) \sim \Bin(H, 1/2)$ and
\[
\E_J\left[S_\phi(J, y)\right] = \E_J\left[e^{-\tau N_\phi(J, y)}\right] = \sum_{i=0}^H \binom{H}{i} \left(\frac{1}{2}\right)^H e^{-\tau i} = \left(\frac{1 + e^{-\tau}}{2}\right)^H.
\]
Taking the expectation over $Y'$ yields $\E_{X,J,M}[S_\phi(J, Y')] = \left(\frac{1 + e^{-\tau}}{2}\right)^H$.
\end{proof}

\subsection{Main result}
Finally, we prove our main theorem by selecting probe rule $\phi = \midr$ for $\alpha_\infty$ and $\phi = \mean$ for $\MeanSE$.

\begin{proof}[Proof of \Cref{thm:lower_bound}]
Set $h=\lfloor\log_{16}(n\eps)\rfloor$ and  $u=\min\{h,\log(\eps/\delta)\}$.
The core tree parameters\footnote{Note when $\delta = 0$, these parameters simplify to $k = \Theta(\log(n \eps) / \eps)$, $H = \Theta(\log n \eps)$, and $\tau = 1/2$ as $n \eps \to \infty$.} are as follows:
\begin{align*}
k&=\left\lfloor\frac{u}{8\eps}\right\rfloor = \Theta(\min\{\log n \eps, \log(\eps/\delta)\} / \eps),\\
H&=\left\lfloor\log_4(n/k)\right\rfloor = \Theta(\log(n \eps)),\\
\tau&=u/H = \Theta\left(\min\left\{1, \frac{\log(\eps/\delta)}{\log n \eps}\right\}\right),
\end{align*}
with $q = k 4^H$. Our proof will show an error bound of $\Omega(k H)$ and $\Omega(k^2 H)$ for $\alpha_\infty$ and $\MeanSE$ respectively.

Let $C_0 \ge 16^{12}$ and $C_1 \ge e^{12}$. Then the theorem assumptions ($n \eps \ge C_0$ and $\eps/\delta \ge C_1$)  give $h\ge u\ge12$ and
\[
\frac{u}{16\eps}\le k\le\frac{u}{8\eps}.
\]
Moreover, $k4^h\le \frac{h4^h}{8\eps}
       \le\frac{16^h}{\eps}\le n$, so $H\ge h\ge u$, and hence $0<\tau\le1$.
By the definition of $H$, we also have $n/4<q\le n$.

For $0\le t\le1$, the inequality
$e^{-t}\le1-t/2$ gives
\begin{equation}\label{eq:expo-tau}
\frac{1+e^{-t}}2\le \frac{1+(1-t/2)}{2} = 1 - \frac{t}{4} \leq e^{-t/4}.
\end{equation}

Using this we can simplify the privacy upper bound for either probe:
\begin{align*}
\E_{X,J,M}[S_\phi(J,Y)]
&\le e^{k \eps} \left(\frac{1 + e^{-\tau}}{2}\right)^H + \delta \frac{e^{k \eps} - 1}{e^\eps - 1} && \text{\Cref{lem:privacy-bound}}\\
&\le e^{k \eps} e^{-H \tau / 4} + \delta \frac{e^{k \eps} - 1}{e^\eps - 1} && \text{\Cref{eq:expo-tau}} \\
&\le e^{k\eps}
   \left(e^{-H\tau/4}+\frac{\delta}{\eps}\right) && \eps \le e^{\eps} - 1 \\
&\le e^{u/8}\left(e^{-u/4}+e^{-u}\right) && k\eps \le u/8 \text{ and } \delta/\eps \le e^{-u}\\
 &\le 2e^{-u/8}
 \le e^{-u/16}. &&u \ge 12 > 16 \log 2
\end{align*}
We combine this bound with \Cref{lem:accuracy-bound} (using the definitions of $R,Q$ defined there) and Jensen's inequality to get

\begin{align}
e^{-\E_{X,M}[4 \tau R / k]} \le \E_{X, M}[e^{-4\tau R/k}]&\le \E_{X,J,M}[S_{\midr}(J, Y)] \le e^{-u/16},\nonumber\\
e^{-\E_{X,M}[16 \tau Q / k^2]} \le \E_{X,M}[e^{-16\tau Q/k^2}] &\le \E_{X,J,M}[S_{\mean}(J, Y)] \le e^{-u/16}.
\label{eq:master-inequalities}
\end{align}
Using $u = \tau H$ we get  $\E_{X,M}[R]\ge\frac{kH}{64}$ and 
$\E_{X,M}[Q]\ge\frac{k^2H}{256}$. Consequently,
\[
\alpha_\infty(M)\ge\E_{X,M}[R]
\ge\frac{kH}{64}
\geq c \left(\frac{\log(n\eps)}{\eps} \min\!\left\{\log(n\eps),\, \log\frac{\eps}{\delta}\right\}\right).
\]
for $c > 0$ sufficiently small.

Since the full-stream mean squared error is at least $q/n$
times the mean squared error on the active prefix, and our construction guarantees $q \ge n / 4$, we have
\[
\operatorname{MeanSE}(M)
\ge\frac qn\,\E_{X,M}[Q]
\ge \frac{1}{4} \cdot \frac{k^2H}{256}
\geq c \left(\frac{\log(n\eps)}{\eps^2} \min\!^2\left\{\log(n\eps),\, \log\frac{\eps}{\delta}\right\}\right),
\]
again for $c>0$ sufficiently small.
The $\MaxSE$
bound follows from
$\MaxSE \ge \MeanSE$.
\end{proof}

Our last result is a consequence of using Markov's inequality on our exponential-moment inequalities in \Cref{eq:master-inequalities} to achieve high probability lower bounds.

\begin{corollary}[High-probability lower bounds]
Under the assumptions of \Cref{thm:lower_bound}, every \((\eps,\delta)\)-DP mechanism \(M\) admits an input \(x\) such that, with probability at least \(1-2e^{-u/32}\),

$$
\|M(x)-A_nx\|_\infty>\frac{kH}{128},
\qquad
\frac1n\|M(x)-A_nx\|_2^2>\frac{k^2H}{2048},
$$

where \(u,k,H\) are as in its proof. For fixed \(\eps,c>0\) and \(\delta\le n^{-c}\), these give \(\Omega(\log^2 n)\) and \(\Omega(\log^3 n)\) orders with failure probability $n^{-\Omega(1)}$.
\end{corollary}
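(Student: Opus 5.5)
We apply Markov's inequality to the exponential moments in \Cref{eq:master-inequalities}, before any Jensen step, and then use averaging to pass from the random input $E_k(X)$ to a single fixed input $x$.

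\textbf{Step 1: bound the probability of small error under the random input.} From \Cref{eq:master-inequalities} we have $\E_{X,M}[e^{-4\tau R/k}] \le e^{-u/16}$. On the event $\{R \le kH/128\}$, using $\tau H = u$,
\[
e^{-4\tau R/k} \ge e^{-4\tau H/128} = e^{-u/32}.
\]
Markov's inequality applied to the nonnegative variable $e^{-4\tau R/k}$ then gives
\[
\Pr_{X,M}\bigl(R \le kH/128\bigr) \le e^{-u/16}/e^{-u/32} = e^{-u/32}.
\]
The squared-error case is analogous. From $\E[e^{-16\tau Q/k^2}] \le e^{-u/16}$, on $\{Q \le k^2H/512\}$ we get $e^{-16\tau Q/k^2} \ge e^{-u/32}$, so
\[
\Pr_{X,M}\bigl(Q \le k^2H/512\bigr) \le e^{-u/32}.
\]

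\textbf{Step 2: pass to the full-stream quantities.} Trivially $\|M(x)-A_nx\|_\infty \ge R$. For the squared error, $\frac1n\|M(x)-A_nx\|_2^2 \ge \frac qn Q \ge Q/4$, since $q > n/4$. Hence $Q > k^2H/512$ implies $\frac1n\|\cdot\|_2^2 > k^2H/2048$.

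\textbf{Step 3: union bound and fix an input.} A union bound gives that the probability, over $X$ and $M$, that either bound fails is at most $2e^{-u/32}$. Write $p(x)$ for the failure probability over $M$ alone when the input is $E_k(x)$. Then $\E_X[p(X)] \le 2e^{-u/32}$, so some realization $x_0$ has $p(x_0) \le 2e^{-u/32}$. The required input is $E_k(x_0) \in \{0,1\}^n$.

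\textbf{Step 4: asymptotics.} For fixed $\eps$ and $\delta \le n^{-c}$ we have $u = \Theta(\log n)$, $k = \Theta(\log n)$, and $H = \Theta(\log n)$. This gives failure probability $n^{-\Omega(1)}$ and error orders $\Omega(\log^2 n)$ and $\Omega(\log^3 n)$.

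\textbf{Main obstacle.} The key point is Step 3. The high-probability statement must hold for a single input $x$, whereas Markov's inequality controls the joint probability over $X$ and $M$. The two failure events must therefore be combined under the joint measure, via the union bound, before averaging down to one realization of $X$. Otherwise the two bounds might be witnessed by different inputs. The rest is checking constants: the threshold must be half the exponent $u/16$, and the $q/n$ factor is where $2048 = 4\cdot 512$ comes from.
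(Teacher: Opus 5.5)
Your proposal is correct and matches the paper's proof essentially step for step: Markov's inequality applied to the pre-Jensen exponential moments in \eqref{eq:master-inequalities} at threshold $e^{-u/32}$ (via $\tau H = u$), a union bound under the joint law of $(X, M)$, the transfer $\frac1n\|\cdot\|_2^2 \ge \frac qn Q \ge Q/4$, and averaging over $X$ to fix one input. You also state explicitly that the witness is $E_k(x_0)\in\{0,1\}^n$ and that the union bound must come before the averaging; the paper covers both points in a single line.
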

\begin{proof}
Use $X,W,R,Q$ from the proof of Theorem~1.
Since $\tau H=u$, Markov's inequality, a union bound,
and \eqref{eq:master-inequalities} give
\[
\begin{aligned}
\Pr_{X,M}\left[
R\le\frac{kH}{128}
\ \text{or}\
Q\le\frac{k^2H}{512}
\right]
&\le e^{u/32}\left(
\E_{X,M}[e^{-4\tau R/k}]
+\E_{X,M}[e^{-16\tau Q/k^2}]
\right)\\
&\le 2e^{-u/32}.
\end{aligned}
\]
Moreover, $q>n/4$ implies
\[
\frac1n\|W-A_nE_k(X)\|_2^2
\ge\frac qn Q\ge\frac14Q.
\]
Thus both claimed error lower bounds hold simultaneously
with probability at least $1-2e^{-u/32}$ over $X$ and
the mechanism's randomness. Therefore
a fixed input $x$ exists with the same guarantee.

For fixed $\varepsilon,c>0$ and $\delta\le n^{-c}$,
we have $u,k,H=\Theta(\log n)$, giving the claimed
asymptotic orders and failure probability.
\end{proof}

\section{Conclusion}
Our work addresses private continual counting under pure DP and approximate DP in the small $\delta \le n^{-\Omega(1)}$ regime. We analyze both $\ell_\infty$ error, as well as mean squared error and max per-coordinate squared error. For fixed $0 < \eps \le 1$, we prove tight lower bounds for these settings, allowing us to more fully characterize the private continual counting problem in terms of its dependence on $n$. It remains to close the gap in the regime of $n^{-o(1)} \le \delta = o(1)$, where the achievable upper bound is given by the Gaussian BTM. Here the best lower bounds come from a combination of \Cref{thm:lower_bound} and \cite{bairaktari2026binary} and \cite{henzinger2023almost} for $\alpha_\infty$ and $\MeanSE$, respectively.


\paragraph{AI Disclosure.} We used a combination of ChatGPT 6 Astra and Gemini 3.8 Flash to help discover key proof ingredients and aid in drafting, reviewing and editing. These tools were used interactively starting from the solution of Bairaktari and Larsen \cite{bairaktari2026binary}. The human authors spent significant time making the argument as concise and clean as possible. The authors verified the correctness and originality of this work in the context of the rest of the literature.

\bibliographystyle{alpha}
\bibliography{ref}

\end{document}